\documentclass[article,12pt]{article}

\usepackage{amsmath,amsthm,amssymb}
\usepackage{verbatim}
\usepackage{url}
\usepackage{fullpage}
\newtheorem{definition}{Definition}
\newtheorem{example}{Example}

\newtheorem{theorem}{Theorem}

\newtheorem{lemma}{Lemma}

\newcommand\blfootnote[1]{%
  \begingroup
  \renewcommand\thefootnote{}\footnote{#1}%
  \addtocounter{footnote}{-1}%
  \endgroup
  }

\begin{document}

\title{On the Linear Programming Bound for Lee-codes}

\author{\small Helena Astola, Ioan Tabus\\
\small Department of Signal Processing, Tampere University of Technology\\
\small P.O.\ Box 553, FI-33101 Tampere, Finland \\
\small Email: helena.astola@tut.fi} \date{}
\maketitle

\begin{abstract}
Finding the largest code with a given minimum distance is one of the most basic problems in coding theory. In this paper, we study the linear programming bound for codes in the Lee metric. We introduce refinements on the linear programming bound for linear codes in the Lee metric, which give a tighter bound for linear codes. We also discuss the computational aspects of the problem and introduce a more compact program by the obtained refinements, and a recursive method for computing the so-called Lee-numbers, which are important in the computation of the linear programming bound.
\end{abstract}

\blfootnote{Date: June 13, 2014.

\indent Keywords: Lee-codes, association schemes, linear programming bound, linear codes.}

\section{Introduction}
\label{sec-intro}

Lee-codes were first introduced by C. Y. Lee in \cite{[Lee_someprop]}. The properties of Lee-codes and, in particular, the existence or  nonexistence of perfect codes in the Lee metric have been studied by numerous authors, for example, in \cite{[SolomonGolomb_lee]}  and \cite{[Astola-Leecodes]}, and more recently in \cite{[horak_lee]}, \cite{[horak_springer]}, \cite{[vardy_lee]} and \cite{[etzion_leecodes]}. However, the research and literature on Lee-codes is not extensive. In data transmission, the Lee metric can be used in phase modulation schemes, since corrupted digits of phase-modulated signals are more likely to have only slightly different phase than greatly different phase compared to the original signal \cite{[Berlekamp]}. There have been some more recent applications of Lee-codes to, for example, VLSI decoders and fault-tolerant logic, which are discussed in \cite{[leemetricBCH]}, \cite{[LeeBCH_decoding]} and \cite{[Ontheuse_Leecodes]}.

One of the most fundamental problems in coding theory is finding the largest code with a given minimum distance. The problem has been studied by several authors, in particular in the Hamming metric. The most well-known bounds are the Hamming bound, Plotkin bound, Singleton bound and Elias bound, and these bounds have been formulated for the Lee metric also. 
The Hamming bound states that the total volume of the radius $t$ spheres around codewords of a code with minimum distance $2t+1$ is at most the volume of the entire space. Codes that meet the Hamming bound are called perfect codes. The Plotkin bound is based on the observation that the minimum distance between any pair of codewords cannot exceed the average distance between all pairs of codewords. The Singleton bound is for linear codes and is based on the observation that the minimum distance of a code cannot be greater than the minimum distance of any of its subcodes. The Elias bound basically combines the Hamming bound and the Plotkin bound to obtain a stronger bound for medium rates, since the Hamming bound is tight at high rates and the Plotkin bound is tight at low rates.

In \cite{[delsarte_thesis]}, Delsarte introduced association schemes to coding theory to deal with topics involving the distance distribution of a code. The theory and applications of association schemes into coding theory have been studied by numerous authors, and an extensive survey of these is given in \cite{[delsarte_levenshtein]}. An important approach to the problem of determining the upper bound for the size of a code is the linear programming approach, which follows from the association scheme structure in the Hamming metric \cite{[delsarte_thesis]}, \cite{[MacWilliams-Sloane]}. In fact, in the Hamming metric, the asymptotically best upper bound is the McEliece-Rodemich-Rumsey-Welch bound, see \cite{[mceliece-rodemich-]}, which is based on the linear programming approach. The Lee metric also forms an association scheme, although the structure is more complicated. In the Hamming metric, the distance relations between codewords directly define an association scheme, but not in the Lee-metric. Therefore, based on the linear programming bound, it is not possible to formulate simple expressions for bounds in the Lee metric. Generalizing to finite Frobenius rings, the linear programming bound has been studied recently in \cite{[byrne_frobenius]}.

For the Hamming metric, extensive tables of bounds on the size of codes for both binary and non-binary codes have been constructed, and such tables can be found, for example, in \cite{[Grassl:codetables]}. For the Lee metric, some values have been computed for small alphabets in \cite{[newupperbounds]}. In \cite{[astola_tabus_bounds]}, a recursive formula for computing the Lee-numbers needed in the computation of the linear programming bound was introduced and more extensive tables with presently known best upper bounds can be found in \cite{[astola:codetables]}.

In this paper, we discuss refinements on the linear programming bound for linear Lee-codes. In the Hamming metric, multiplying codewords by some constant does not change the weight of the codewords. However, in the Lee metric, multiplication typically changes the Lee-composition of the codeword and so also usually the Lee-weight. With linear codes, since they are linear subspaces of vector spaces, all the multiplied versions of any codeword also belong to the code. An important observation is that there must be as many codewords having the Lee-composition of a given codeword $\mathbf{x}$ as there are codewords having the Lee-composition of $r\mathbf{x}$ that is obtained by multiplication of the given codeword $\mathbf{x}$ by some constant $r$. Therefore, we get additional equality constraints between the coefficients of the weight distribution in the linear programming problem. Also, we can reduce the complexity of the problem since some coefficients can be assigned to zero following from the above observation. Furthermore, for a linear code, the linear transformation of the distribution vector by the second eigenmatrix of the scheme gives the distribution vector of the dual code. As it is also linear, the above observations also give constraints for the transformed vector, which turn out be equivalent to the constraints given by the linearity of the code. By introducing these equalities, we obtain a tighter bound for linear codes in the Lee metric.

The paper is organized as follows. Because the literature on Lee-codes is somewhat scattered and the notations vary, in Section \ref{sec-scheme} we review the concept of association schemes, in particular the Lee-scheme, and computing the linear programming bound for Lee-codes. Important concepts when determining the linear programming bound are Lee-compositions and Lee-numbers. In Section \ref{sec-refi}, we introduce the refinements on the linear programming bound for linear Lee-codes. In Section \ref{sec-compu}, we discuss the computational aspects in the linear programming problem. We discuss the computation of Lee-numbers, which can be done by recursion, and introduce an effective recursion based on the polynomial representation of the Lee-numbers. We also introduce a more compact linear program for solving the bounds for linear Lee-codes. We present the results obtained by applying the refinements in Section \ref{sec-result} together with some example linear codes that meet the obtained bounds.

\section{Lee Codes and Lee Schemes}
\label{sec-scheme}

In this section, we give a short but careful review of codes and schemes in the Lee metric. For basic properties of error-correcting codes we refer to \cite{[Berlekamp]}, \cite{[MacWilliams-Sloane]}.

Denote by $\mathbb{Z}_{q}^{n}$ the set of $n$-tuples with elements from the set $\{0,1,\ldots,q-1\}$, i.e.,
$$\mathbb{Z}_{q}^{n} = \{\mathbf{x}= [x_{1},\ldots,x_{n}] \mid x_{i}\in \{0,1,\ldots,q-1\}\}.$$

The elements of $\mathbb{Z}_{q}^{n}$ are $q$-ary vectors of block length $n$. A code $C$ is a subset of $\mathbb{Z}_{q}^{n}$. When $q=p^{k}$, where $p$ is prime, $\mathbb{Z}_{q}^{n}$ is a vector space over the field of $q$ elements, denoted by $\mathbb{F}_{q}^{n}$. In this paper, for simplicity, we discuss only fields $\mathbb{F}_{q}^{n}$, where $q$ is prime and the Lee-distance is naturally defined on the elements. $C$ is a linear code if it is a linear subspace of $\mathbb{F}_{q}^{n}$. The elements of $C$ are called codewords. A linear code $C$ of dimension $k\leq n$ is spanned by $k$ linearly independent vectors of $C$.

The Hamming distance $d_{H}(\mathbf{x},\mathbf{y})$ of vectors $\mathbf{x}$ and $\mathbf{y}$ of length $n$ is the number of coordinates where $\mathbf{x}$ and $\mathbf{y}$ differ, i.e. $d_{H}(\mathbf{x},\mathbf{y}) = |\{i \mid x_{i}\neq y_{i}\}|$.
The Hamming-weight $w_{H}$ of a vector $\mathbf{x}$ is $w_{H}(\mathbf{x})=d_{H}(\mathbf{0},\mathbf{x})$.

The Lee distance $d_{L}(\mathbf{x},\mathbf{y})$ of $q$-ary vectors $\mathbf{x}$ and $\mathbf{y}$ of length $n$ is defined as follows:
\begin{eqnarray}d_{L}(\mathbf{x},\mathbf{y})=\sum_{i = 1}^{n}\min(|x_{i}-y_{i}|,q-|x_{i}-y_{i}|).\end{eqnarray}

The Lee-weight $w_{L}$ of a vector $\mathbf{x}$ is $w_{L}(\mathbf{x})=d_{L}(\mathbf{0},\mathbf{x})$.

A code $C$ is $e$-error-correcting if the minimum distance between two codewords is $2e+1$. An $e$-Lee-error-correcting code will be able to detect and correct errors, which are at the Lee distance less than or equal to $e$ from the encoded sequence, i.e., have Lee-weight less than or equal to $e$.

An association scheme is a set together with relations defined on it that satisfy certain properties. The following definitions are given according to \cite{[Astola-Leecodes]}, \cite{[delsarte_thesis]},   \cite{[MacWilliams-Sloane]}:

\begin{definition}
An {\it association scheme with $n$ classes} consists of
a finite set $X$ together with $n+1$ relations
$R_{0},R_{1},\ldots,R_{n}$ defined on $X$ which satisfy
\item{(i)}
Each $R_{i}$ is symmetric: $(x, y) \in R_{i}\Rightarrow (y,x)\in
R_{i}$.
\item{(ii)} For every $x,y\in X, \ (x,y)\in R_{i}$ for exactly one $i$.
\item{(iii)} $R_{0}=\left\{(x,x)\mid x\in X\right\}$ is the identity relation.
\item{(iv)} If $(x,y)\in R_{k}$, the number of $z\in X$ such that $(x,z)\in R_{i}$ and $(y,z)\in R_{j}$ is a
constant $c_{ijk}$ depending on $i, j, k$ but not on the particular
choice of $x$ and $y$.
\end{definition}

For example, the Hamming scheme consists of the set of $q$-ary vectors of length $n$ and the vectors $\mathbf{x},\mathbf{y}\in R_{i}$ if their Hamming distance is $i$. It is easy to verify, that the above conditions hold for the Hamming scheme.

In order to define the Lee-scheme, first we need to define the Lee-composition of a vector. The Lee-composition
$l(\mathbf{x})$ of $\mathbf{x}\in \mathbb{Z}_{q}^{n}$ is the vector
\begin{eqnarray*}
l(\mathbf{x})=[l_{0}(\mathbf{x}),l_{1}(\mathbf{x}),\ldots,l_{s}(\mathbf{x})],
\end{eqnarray*}
where $s=\lfloor\frac{q}{2}\rfloor$ and $l_{i}(\mathbf{x})$ is the number of
the components of $\mathbf{x}$ having Lee-weight $i$.

Now, consider the $q$-ary vectors of length
$n = 1$. For this case, the distance relations define an association scheme in the Lee metric.
More formally, let $X=\mathbb{Z}_{q}$ and define the relations $R_{0},R_{1},\ldots,R_{s}, \
s=\lfloor\frac{q}{2}\rfloor$ with
\begin{eqnarray*}
(x,y) \in R_{i} \Leftrightarrow d_{L}(x,y)=i.
\end{eqnarray*}

The conditions (i)-(iv) of an association scheme can easily be shown to be
satisfied.

For $n>1$ the Lee-scheme is defined as the Delsarte extension of the one-dimensional Lee-scheme and, thus, forms an association scheme defined as follows. Take two elements $\mathbf{x}=[x_{1},\ldots,x_{n}]$,
$\mathbf{y}=[y_{1},\ldots,y_{n}]$ of $\mathbb{Z}_{q}^{n}$. Let $\rho_{t}(\mathbf{x},\mathbf{y})$ be
the number of integers $i$, $1\leq i\leq n$ such that
$(x_{i},y_{i})\in R_{t}$ 
and define the following $(s + 1)$-tuple:
$$\rho(\mathbf{x},\mathbf{y})=[\rho_{0}(\mathbf{x},\mathbf{y}),\rho_{1}(\mathbf{x},\mathbf{y}),\ldots,\rho_{s}(\mathbf{x},\mathbf{y})].$$

Now $\rho(\mathbf{x},\mathbf{y})$ equals the Lee-composition of
the vector $\mathbf{x}-\mathbf{y}$. The number of distinct Lee-compositions is ${n+s\choose s}$.

Let $\rho^{(0)}=\rho(\mathbf{x},\mathbf{x})=[n,0,\ldots,0]$, i.e., the Lee-composition of the all zero vector, and denote by
$\rho^{(1)},\ldots,\rho^{(\alpha)}$, where $\alpha={n+s \choose
s}-1$, the other distinct Lee-compositions. Let us define the set
$K_{0},K_{1},\ldots,K_{\alpha}$ of relations on $\mathbb{Z}_{q}^{n}$ as follows
\begin{eqnarray*}
(\mathbf{x},\mathbf{y})\in K_{i} \Leftrightarrow
\rho(\mathbf{x},\mathbf{y})=\rho^{(i)}.
\end{eqnarray*}
In other words, $(\mathbf{x},\mathbf{y})\in K_{i}$ if the Lee-composition of the vector $\mathbf{x}-\mathbf{y}$ equals $\rho^{(i)}$.

Now $\mathbb{Z}_{q}^{n}$ together with the relations $K_{i}$ form an association
scheme of $\alpha$ classes. This will be called the
Lee-scheme.

The relations $K_{i}$ can be described by their adjacency matrices, i.e., matrices $D_{i}$ with rows and columns labeled by the points of $\mathbb{Z}_{q}^{n}$, where
$$(D_{i})_{\mathbf{x},\mathbf{y}} = \begin{cases} 1 & \hbox{ if } (\mathbf{x},\mathbf{y})\in K_{i},\\ 0 & \hbox{ otherwise.}\end{cases}$$

These adjacency matrices generate an associative and commutative algebra called the Bose-Mesner algebra of the association scheme.

Let $\xi = \exp(\frac{2\pi \sqrt{-1}}{q})$. When $\mathbf{t}=[t_{0},\ldots,t_{s}]$ and
$\mathbf{u}=[u_{0},\ldots,u_{s}]$ are Lee-compositions we define
the {\it Lee-numbers} $L_{\mathbf{t}}(\mathbf{u})$ \cite{[Astola-Leecodes]} from
\begin{eqnarray}
\label{eq-leenumb2s1}
&&\prod_{l=0}^{s} (z_{0}+(\xi^{l}+\xi^{-l})z_{1} +
(\xi^{2l}+\xi^{-2l})z_{2} +\cdots\\
&&+(\xi^{sl}+\xi^{-sl})z_{s})^{u_{l}}=
\sum_{\mathbf{t}}L_{\mathbf{t}} (\mathbf{u})z_{0}^{t_{0}}
\cdots z_{s}^{t_{s}},
\hbox{ for } q=2s+1\nonumber
\end{eqnarray}
and from
\begin{eqnarray}
\label{eq-leenumb2s}
&&\prod_{l=0}^{s} (z_{0}+(\xi^{l}+\xi^{-l})z_{1} +
(\xi^{2l}+\xi^{-2l})z_{2} +\cdots\nonumber\\
&&+(\xi^{(s-1)l}+\xi^{-(s-1)l})z_{s-1} +\xi^{sl}z_{s})^{u_{l}}\\ &&=\sum_{\mathbf{t}}L_{\mathbf{t}} (\mathbf{u})z_{0}^{t_{0}}
\cdots z_{s}^{t_{s}} \hspace{3.0cm}\hbox{ for } q=2s.\nonumber
\end{eqnarray}

Consider the Lee-scheme $\mathbb{Z}_{q}^{n}$ with the relations $K_{\mathbf{t}}$.
Let $C$ be a nonempty subset of $\mathbb{Z}_{q}^{n}$. The inner distribution of $C$
is the $(\alpha+1)$ -tuple of rational numbers
$B_{\mathbf{t}}$, where
\begin{eqnarray}
\label{eq-innerdist}
B_{\mathbf{t}} = \frac{1}{|C|} |K_{\mathbf{t}} \cap C^{2}|.
\end{eqnarray}

Now
\begin{eqnarray}
\label{eq-inner2}
B_{\mathbf{t}_{0}} =1, \ B_{\mathbf{t}}\geq 0 \hbox{ and }
\sum_{\mathbf{t}} B_{\mathbf{t}}=|C|.
\end{eqnarray}

For Lee-compositions $\mathbf{k}$
\begin{eqnarray}
B'_{\mathbf{k}} = \frac{1}{|C|} \sum_{\mathbf{t}}
L_{\mathbf{k}}(\mathbf{t}) B_{\mathbf{t}} \geq 0,
\end{eqnarray}
i.e., certain linear combinations
of the numbers $B_{\mathbf{t}}$ are nonnegative, which makes it possible
to apply the linear
programming bound to Lee-codes.
This can be proved using the properties of the association scheme, see \cite{[Astola-Leecodes]}.

Let us write for a composition $\mathbf{t}=[t_{0},t_{1},\ldots,t_{s}]$ \cite{[Astola-Leecodes]}:
$$\left[\begin{matrix}n\\ \mathbf{t}\end{matrix}\right]={n \choose \mathbf{t}}2^{n-t_{0}} \hspace{1.0cm} \hbox{ for } q=2s+1$$
and
$$\left[\begin{matrix}n\\ \mathbf{t}\end{matrix}\right]={n \choose \mathbf{t}}2^{n-t_{0}-t_{s}} \hspace{1.0cm} \hbox{ for } q=2s,$$
where ${n \choose \mathbf{t}}$ is the multinomial coefficient, which for nonnegative integers $t_{1},\ldots,t_{r}$ with $m=t_{1}+\cdots +t_{r}$ is defined as
$${m\choose t_{1},\ldots,t_{r}}=\frac{(t_{1}+\cdots+t_{r})!}{t_{1}!\cdots t_{r}!}=\frac{m!}{t_{1}!\cdots t_{r}!}.$$
The value of $\left[\begin{matrix}n\\ \mathbf{t}\end{matrix}\right]$ corresponds to the number of vectors in $\mathbb{Z}_{q}^{n}$ for which the Lee-composition is $\mathbf{t}$.
Let $C$ be a block code of length $n$ over $\mathbb{Z}_{q}$. The inner distribution
$[B_{\mathbf{t}_{0}},\ldots,B_{\mathbf{t}_{\alpha}}]$ of $C$ in the
corresponding Lee-scheme is
given by (\ref{eq-innerdist}). Due to (\ref{eq-inner2}) and since the eigenvalues of the Lee-scheme are given by the Lee-numbers we have
\begin{eqnarray}
|C|=\sum_{i=0}^{\alpha} B_{\mathbf{t}_{i}}, \ B_{\mathbf{t}}\geq0,
\ \mathbf{t}\in \left\{\mathbf{t}_{0},\ldots,\mathbf{t}_{\alpha}\right\}
\end{eqnarray}
and for any $\mathbf{k}$
\begin{eqnarray}
\sum_{i=1}^{\alpha}
L_{\mathbf{k}}(\mathbf{t}_{i})B_{\mathbf{t}_{i}} \geq -
\left[\begin{matrix} n \\ \mathbf{k}\end{matrix}\right].
\end{eqnarray}

Now, we may formulate the primal linear programming problem for the Lee-scheme \cite{[Astola-Leecodes]}:

\begin{theorem}\label{the-linprog}
Let $B_{\mathbf{t}}^{*}, \ \mathbf{t}\in \left\{\mathbf{t}_{1}, \ldots,\mathbf{t}_{\alpha}\right\}$ be an optimal
solution of the linear programming problem
\begin{eqnarray}
\label{eq-primal_lee}
&&\sum_{i=1}^{\alpha}B_{\mathbf{t}_{i}} =\max!\nonumber\\
&&B_{\mathbf{t}_{i}}\geq 0, \ i\in I \hbox{ and }B_{\mathbf{t}_{i}}= 0, \ i\in \{1,\ldots,\alpha\}\setminus I\nonumber\\
&&\sum_{i=1}^{\alpha} B_{\mathbf{t}_{i}}
L_{\mathbf{k}}(\mathbf{t}_{i})\geq - \left[\begin{matrix}n\\ \mathbf{k}\end{matrix}\right],
\ \\ &&\small \hbox{for all }\mathbf{k} \hbox{ running through Lee-compositions},\nonumber
\end{eqnarray}
where $I = \{i\mid l(\mathbf{x})=\mathbf{t}_{i} \hbox{ and } w_{L}(\mathbf{x})\geq d\}$. Then $1+\sum_{i=1}^{\alpha} B_{\mathbf{t}_{i}}^{*}$ is an upper bound to the size of the code $C$ with the minimum distance $d$.
\end{theorem}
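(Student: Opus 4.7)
The plan is the standard LP-bound argument: exhibit the inner distribution of an arbitrary code of minimum distance $d$ as a feasible point of the linear program, then invoke optimality of $B^{*}$.

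First, I would fix an arbitrary code $C\subseteq\mathbb{Z}_{q}^{n}$ with minimum Lee-distance $d$ and form its inner distribution $B_{\mathbf{t}}$ from (\ref{eq-innerdist}). Three feasibility checks are needed. The nonnegativity $B_{\mathbf{t}_{i}}\geq 0$ is immediate from (\ref{eq-inner2}). The support condition $B_{\mathbf{t}_{i}}=0$ for $i\notin I$ follows because $(\mathbf{x},\mathbf{y})\in K_{\mathbf{t}_{i}}$ forces $d_{L}(\mathbf{x},\mathbf{y})=w_{L}(\rho^{(i)})$; if $\mathbf{t}_{i}\neq \mathbf{t}_{0}$ has $w_{L}(\mathbf{t}_{i})<d$, then $|K_{\mathbf{t}_{i}}\cap C^{2}|>0$ would produce two distinct codewords at Lee-distance less than $d$, contradicting the minimum distance assumption.

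Next, the inequality constraints. The nonnegativity $B'_{\mathbf{k}}\geq 0$ gives $\sum_{\mathbf{t}}L_{\mathbf{k}}(\mathbf{t})B_{\mathbf{t}}\geq 0$, and since $B_{\mathbf{t}_{0}}=1$, this rearranges to
\begin{equation*}
\sum_{i=1}^{\alpha}L_{\mathbf{k}}(\mathbf{t}_{i})B_{\mathbf{t}_{i}}\ \geq\ -L_{\mathbf{k}}(\mathbf{t}_{0}).
\end{equation*}
The step that requires a short calculation, and is the only real content beyond bookkeeping, is the identification $L_{\mathbf{k}}(\mathbf{t}_{0})=\left[\begin{smallmatrix}n\\ \mathbf{k}\end{smallmatrix}\right]$. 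To see this I substitute $\mathbf{u}=\mathbf{t}_{0}=[n,0,\ldots,0]$ into (\ref{eq-leenumb2s1}) or (\ref{eq-leenumb2s}); only the $l=0$ factor survives, and since $\xi^{0}=1$ each pair $\xi^{il}+\xi^{-il}$ evaluates to $2$ (with the lone term $\xi^{sl}$ evaluating to $1$ in the even-$q$ case). Extracting the coefficient of $z_{0}^{k_{0}}\cdots z_{s}^{k_{s}}$ in $(z_{0}+2z_{1}+\cdots+2z_{s})^{n}$, respectively $(z_{0}+2z_{1}+\cdots+2z_{s-1}+z_{s})^{n}$, yields exactly ${n\choose \mathbf{k}}2^{n-k_{0}}$ or ${n\choose \mathbf{k}}2^{n-k_{0}-k_{s}}$, matching the definition of $\left[\begin{smallmatrix}n\\ \mathbf{k}\end{smallmatrix}\right]$ in both parities.

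Having verified that the inner distribution $(B_{\mathbf{t}_{1}},\ldots,B_{\mathbf{t}_{\alpha}})$ is a feasible point, I conclude by optimality of $B^{*}$ that $\sum_{i=1}^{\alpha}B_{\mathbf{t}_{i}}\leq \sum_{i=1}^{\alpha}B_{\mathbf{t}_{i}}^{*}$. Combined with $|C|=\sum_{i=0}^{\alpha}B_{\mathbf{t}_{i}}=1+\sum_{i=1}^{\alpha}B_{\mathbf{t}_{i}}$ from (\ref{eq-inner2}), this gives the claimed bound $|C|\leq 1+\sum_{i=1}^{\alpha}B_{\mathbf{t}_{i}}^{*}$. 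There is no serious obstacle; the only part that is easy to botch is getting the $+1$ correct, which comes precisely from separating out the $\mathbf{t}_{0}$ contribution and using $L_{\mathbf{k}}(\mathbf{t}_{0})=\left[\begin{smallmatrix}n\\ \mathbf{k}\end{smallmatrix}\right]$ so that the constraint matches (\ref{eq-primal_lee}) on the nose.
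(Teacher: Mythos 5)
Your proof is correct and follows exactly the route the paper takes: the displayed relations (6) and (7) preceding Theorem \ref{the-linprog} are precisely the feasibility conditions you verify (nonnegativity, the vanishing of $B_{\mathbf{t}_{i}}$ for compositions of Lee-weight below $d$, and the inequality obtained from $B'_{\mathbf{k}}\geq 0$ after separating the $\mathbf{t}_{0}$ term via $L_{\mathbf{k}}(\mathbf{t}_{0})=\left[\begin{smallmatrix}n\\ \mathbf{k}\end{smallmatrix}\right]$), and the conclusion by optimality is the same. Your explicit verification of $L_{\mathbf{k}}(\mathbf{t}_{0})$ from the generating polynomials is a detail the paper leaves to the cited reference, but it is carried out correctly in both parities of $q$.
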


\section{Refinements on the Linear Programming Bound for Linear Lee-Codes}
\label{sec-refi}

In the Hamming metric, multiplying codewords by some constant does not change the weight of the codewords. In the Lee metric, however, when a codeword is multiplied by some constant, the Lee-composition and the Lee-weight of the codeword are usually changed. Since linear codes are linear subspaces of vector spaces, all the vectors obtained by multiplying a codeword also belong to the code. In this section, we show that there must be as many codewords having the Lee-composition of a given codeword $\mathbf{x}$ as there are codewords having the Lee-composition of the codeword $r\mathbf{x}$ that is obtained by multiplying $\mathbf{x}$ by some constant $r$. This property can then be used to formulate constraining equalities into the linear programming problem, since the cardinality of a set of codewords having a given Lee-composition corresponds to a coefficient of the inner distribution of the code.

For simplicity, we let $q$ be prime, $\mathbb{F}_{q}^{n}=\{\mathbf{x}\mid x_{i}\in \mathbb{F}_{q},i=1,\ldots,n\}$. 
Let $C\subseteq \mathbb{F}_{q}^{n}$ be a linear code. For a linear code, if $\mathbf{x}\in C$, then $r\mathbf{x}\in C$ for all $r\in\{0,\ldots,q-1\}$. Denote by $C_{\mathbf{t}}$ the set of codewords having the Lee-composition $\mathbf{t}$.

Let us first examine how to obtain the Lee-composition of the vector $r\mathbf{x}$ from the Lee-composition of the vector $\mathbf{x}$. The vector $\mathbf{x}$ has the Lee-composition $l(\mathbf{x})=[l_{0}(\mathbf{x}),l_{1}(\mathbf{x}),\ldots,l_{s}(\mathbf{x})]$. The Lee-composition of the vector $r\mathbf{x}$ is clearly a permutation of the Lee-composition $l(\mathbf{x})$, since $r\mathbf{x}$ contains the same number of elements equal to $rx_{i}$ as the vector $\mathbf{x}$ contains elements equal to $x_{i}$. Thus, the Lee-composition of the vector $r\mathbf{x}$ is $l(r\mathbf{x})=[l_{\pi_{r}(0)}(\mathbf{x}),l_{\pi_{r}(1)}(\mathbf{x}),\ldots,l_{\pi_{r}(s)}(\mathbf{x})]$, where $\pi_{r}(i)=|k|$ such that $kr \equiv i \mod q$ and $-s\leq k \leq s$.

Now, we introduce the following Lemma:

\begin{lemma}
Given any Lee-composition $\mathbf{t}$ and any integer $r\in\{1,\ldots,q-1\}$, define the Lee-composition $\mathbf{u}=[t_{0},t_{\pi_{r}(1)},\ldots,t_{\pi_{r}(s)}]$. For any linear code the sets $C_{\mathbf{t}}$ and $C_{\mathbf{u}}$ have equal cardinalities.
\end{lemma}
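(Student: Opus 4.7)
The plan is to exhibit an explicit bijection between $C_{\mathbf{t}}$ and $C_{\mathbf{u}}$, given by scalar multiplication by $r$. Since $q$ is prime and $r \in \{1,\ldots,q-1\}$, the scalar $r$ is a unit of $\mathbb{F}_q$, so $\mu_r : \mathbb{F}_q^n \to \mathbb{F}_q^n$, $\mathbf{x} \mapsto r\mathbf{x}$, is a linear automorphism with inverse $\mu_{r^{-1}}$. Because $C$ is a linear subspace, $\mu_r$ restricts to a bijection $C \to C$. Once I know that $\mu_r$ sends $C_{\mathbf{t}}$ into $C_{\mathbf{u}}$ and $\mu_{r^{-1}}$ sends $C_{\mathbf{u}}$ into $C_{\mathbf{t}}$, the two restrictions will be mutually inverse bijections between the two sets, yielding $|C_{\mathbf{t}}| = |C_{\mathbf{u}}|$.

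To establish the first inclusion I would invoke the computation performed in the paragraph immediately preceding the lemma: if $l(\mathbf{x}) = \mathbf{t}$, then $l(r\mathbf{x}) = [t_{\pi_r(0)}, t_{\pi_r(1)}, \ldots, t_{\pi_r(s)}]$. Because $0 \cdot r \equiv 0 \pmod q$, we have $\pi_r(0) = 0$, so this vector is precisely $\mathbf{u} = [t_0, t_{\pi_r(1)}, \ldots, t_{\pi_r(s)}]$; hence $\mu_r(C_{\mathbf{t}}) \subseteq C_{\mathbf{u}}$. The reverse inclusion is obtained by applying the same reasoning with $r$ replaced by its inverse $r^{-1}$ in $\mathbb{F}_q$.

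The one formal point that remains is to verify that $\pi_r$ is a well-defined permutation of $\{0,1,\ldots,s\}$ with $\pi_{r^{-1}} = \pi_r^{-1}$. Well-definedness comes from the fact that for each $i$ there is a unique $k$ modulo $q$ with $kr \equiv i$, and the representatives $-s,\ldots,s$ cover each residue class once (up to the harmless sign ambiguity at $k=0$, which is absorbed by taking $|k|$). Injectivity is immediate: if $\pi_r(i_1) = \pi_r(i_2)$, pick witnesses $k_1,k_2$ with $|k_1|=|k_2|$; then $i_1 \equiv \pm i_2 \pmod q$, which forces $i_1 = i_2$ since both lie in $\{0,\ldots,s\}$. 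Finally, multiplying the defining congruence $kr \equiv i$ through by $r^{-1}$ yields $\pi_{r^{-1}} = \pi_r^{-1}$, as required to close the argument. I do not expect any substantive obstacle: the combinatorial content is already packaged in the preceding identification of $l(r\mathbf{x})$, and the rest is bookkeeping about units in $\mathbb{F}_q$.
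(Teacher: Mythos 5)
Your proposal is correct and follows essentially the same route as the paper: the paper's proof also exhibits the map $\mathbf{x}\mapsto r\mathbf{x}$ as a bijection from $C_{\mathbf{t}}$ to $C_{\mathbf{u}}$, using injectivity of scalar multiplication and the existence of $r^{-1}$ modulo the prime $q$ for surjectivity. Your additional verification that $\pi_{r}$ is a well-defined permutation with $\pi_{r^{-1}}=\pi_{r}^{-1}$ is a reasonable extra bookkeeping step that the paper leaves implicit.
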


\begin{proof}
First, any $\mathbf{x}_{1}\in C_{\mathbf{t}}$ has one corresponding element $r\mathbf{x}_{1}\in C_{\mathbf{u}}$. Also, every two distinct codewords in $C_{\mathbf{t}}$ correspond to two distinct elements in $C_{\mathbf{u}}$, since $r\mathbf{x}_{1}\neq r\mathbf{x}_{2}$ if $\mathbf{x}_{1}\neq \mathbf{x}_{2}$. Finally, every codeword in $C_{\mathbf{u}}$ corresponds to one codeword in $C_{\mathbf{t}}$, since for any $\mathbf{y}\in C_{\mathbf{u}}$, we may take $r^{-1}$ such that $r^{-1}r\equiv 1\mod q$ and $r^{-1}\mathbf{y}$ will belong to $C_{\mathbf{t}}$.
\end{proof}

The cardinalities of the sets $C_{\mathbf{t}}$ correspond to the coefficients of the inner distribution of the code $C$. Since there are equalities between these cardinalities, we get additional equality constraints in the linear programming problem for linear Lee-codes.

Let us denote by $\tau(\mathbf{t})$ the mapping that maps the Lee-composition $\mathbf{t}$ into the set of Lee-compositions, which are obtained from $\mathbf{t}$ by multiplication of vectors having the Lee-composition $\mathbf{t}$ by all $r\in\{1,\ldots,q-1\}$. Then,
\begin{eqnarray*}\tau(\mathbf{t}) = \{[t_{\pi_{r}(0)}(\mathbf{x}),t_{\pi_{r}(1)}(\mathbf{x}),\ldots,t_{\pi_{r}(s)}(\mathbf{x})]\mid \mathbf{t}=l(\mathbf{x}),\pi_{r}(i)=|k|,\\ kr \equiv i \mod q,-s\leq k \leq s, \ r,k\in\mathbb{F}_{q}\}.\end{eqnarray*}

We may now formulate the linear programming problem for linear Lee-codes:
\begin{theorem}\label{the-linprog-linref}
Let $B_{\mathbf{t}}^{*}, \ \mathbf{t}\in \left\{\mathbf{t}_{1}, \ldots,\mathbf{t}_{\alpha}\right\}$ be an optimal
solution of the linear programming problem
\begin{eqnarray}
\label{eq-primal_lee}
&&\sum_{i=1}^{\alpha}B_{\mathbf{t}_{i}} =\max!\nonumber\\
&&B_{\mathbf{t}_{i}}\geq 0, \ i\in I \hbox{ and }B_{\mathbf{t}_{i}}= 0, \ i\in \{1,\ldots,\alpha\}\setminus I\nonumber\\
&&B_{\mathbf{t}_{i}}=B_{\mathbf{t}_{j}} \hbox{ for all } \mathbf{t}_{j}\in \tau(\mathbf{t}_{i}) 
\nonumber\\
&&\sum_{i=1}^{\alpha} B_{\mathbf{t}_{i}}
L_{\mathbf{k}}(\mathbf{t}_{i})\geq - \left[\begin{matrix}n\\ \mathbf{k}\end{matrix}\right],
\ \\ &&\small \hbox{for all }\mathbf{k} \hbox{ running through Lee-compositions},\nonumber
\end{eqnarray}
where $I = \{i\mid l(\mathbf{x})=\mathbf{t}_{i} \hbox{ and } w_{L}(\mathbf{x})\geq d\}$. Then $1+\sum_{i=1}^{\alpha} B_{\mathbf{t}_{i}}^{*}$ is an upper bound to the size of the code $C$ with the minimum distance $d$.
\end{theorem}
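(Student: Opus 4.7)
The plan is to reduce Theorem \ref{the-linprog-linref} to Theorem \ref{the-linprog} by showing that for any linear code $C\subseteq \mathbb{F}_q^n$ with minimum distance $d$, the inner distribution of $C$ is feasible for the refined linear program (\ref{eq-primal_lee}). Since Theorem \ref{the-linprog} already establishes that this inner distribution satisfies the inequality constraints and the support condition, the only new thing to check is that the added equalities $B_{\mathbf{t}_i}=B_{\mathbf{t}_j}$ for $\mathbf{t}_j\in\tau(\mathbf{t}_i)$ are also satisfied by any linear code.

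First, I would rewrite the inner distribution in a form that makes the Lemma directly applicable. By definition, $B_{\mathbf{t}}=\frac{1}{|C|}|K_{\mathbf{t}}\cap C^2|$, and $(\mathbf{x},\mathbf{y})\in K_{\mathbf{t}}$ means $l(\mathbf{x}-\mathbf{y})=\mathbf{t}$. For a \emph{linear} code, $\mathbf{x}-\mathbf{y}\in C$, and for each codeword $\mathbf{z}\in C_{\mathbf{t}}$ there are exactly $|C|$ pairs $(\mathbf{x},\mathbf{y})\in C^2$ with $\mathbf{x}-\mathbf{y}=\mathbf{z}$. Hence
\begin{eqnarray*}
B_{\mathbf{t}} \;=\; |C_{\mathbf{t}}|,
\end{eqnarray*}
so each LP variable coincides with the number of codewords having the corresponding Lee-composition.

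Second, I would apply the Lemma: for every $\mathbf{t}_j\in\tau(\mathbf{t}_i)$ we have $|C_{\mathbf{t}_i}|=|C_{\mathbf{t}_j}|$, hence $B_{\mathbf{t}_i}=B_{\mathbf{t}_j}$, so the inner distribution of $C$ does satisfy the new equality constraints. Combined with the feasibility already granted by Theorem \ref{the-linprog} (non-negativity, vanishing outside $I$ because the minimum distance is $d$, and the Lee-number inequalities that come from $B'_{\mathbf{k}}\ge 0$), the vector $(B_{\mathbf{t}_1},\ldots,B_{\mathbf{t}_\alpha})$ is a feasible point of the refined LP.

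Finally, since $|C|=1+\sum_{i=1}^{\alpha} B_{\mathbf{t}_i}$ (the $+1$ accounts for the all-zero codeword, i.e.\ the $\mathbf{t}_0$ coordinate), any feasible objective value is a lower bound on the optimum, and in particular $|C|\le 1+\sum_{i=1}^{\alpha}B_{\mathbf{t}_i}^{*}$. I do not expect a real obstacle here: the only substantive step is the identification $B_{\mathbf{t}}=|C_{\mathbf{t}}|$ for linear codes, which bridges the Lemma (a statement about cardinalities of composition classes) with the LP variables (defined through the adjacency relations of the Lee-scheme); once this identification is in place the theorem follows immediately from Theorem \ref{the-linprog} and the Lemma.
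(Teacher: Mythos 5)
Your proof is correct and follows essentially the same route as the paper: the paper states this theorem without a separate proof, relying on the preceding Lemma together with the remark that ``the cardinalities of the sets $C_{\mathbf{t}}$ correspond to the coefficients of the inner distribution,'' and on Theorem~\ref{the-linprog} for the rest of the constraints. Your write-up simply makes explicit the identification $B_{\mathbf{t}}=|C_{\mathbf{t}}|$ for linear codes (via the $|C|$ pairs representing each difference), which the paper asserts without detail.
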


Let us look at an example on the equalities between different weight coefficients:

\begin{example}
Take the linear $[3,2]$-code over $\mathbb{F}_{7}$ with the generator matrix
\begin{eqnarray*}G = \left[\begin{matrix} 1 & 0& 2 \\ 0 & 1 & 4\end{matrix}\right].\end{eqnarray*}

The codewords of the code are
\begin{eqnarray*}[0,0,0],[0,1,4],[0,2,1],[0,3,5],[0,4,2],[0,5,6],[0,6,3],\\ \relax [1,0,2],[1,1,6],[1,2,3],[1,3,0],[1,4,4],[1,5,1],[1,6,5], \\ \relax [2,0,4],[2,1,1],[2,2,5],[2,3,2],[2,4,6],[2,5,3],[2,6,0], \\ \relax [3,0,6],[3,1,3],[3,2,0],[3,3,4],[3,4,1],[3,5,5],[3,6,2], \\ \relax [4,0,1],[4   ,  1  ,   5],[4   ,  2  ,   2],[4   ,  3  ,   6],[4   ,  4  ,   3],[4   ,  5  ,   0],[4   ,  6  ,   4], \\  \relax[5,  0  ,   3],[5   ,  1  ,   0],[5   ,  2  ,   4],[5   ,  3  ,   1],[5   ,  4  ,   5],[5   ,  5  ,   2],[5   ,  6  ,   6],\\  \relax[6   ,  0  ,   5],[6   ,  1  ,   2],[6   ,  2  ,   6],[6   ,  3  ,   3],[6   ,  4  ,   0],[6   ,  5  ,   4],[6   ,  6 ,    1].\end{eqnarray*}

The different Lee-compositions of the codewords are
\begin{eqnarray*}[3,0,0,0],[1,1,1,0],[1,1,0,1],[1,0,1,1],[0,3,0,0],[0,2,1,0],\\ \relax [0,1,1,1],[0,1,0,2],[0,0,3,0],[0,0,2,1],[0,0,0,3].\end{eqnarray*}

Let us take, for example, the second Lee-composition $[1,1,1,0]$. There are $6$ codewords having this Lee-composition, $[0,2,1]$, $[0,5,6]$, $[1,0,2]$, $[2,6,0]$, $[5,1,0]$ and $[6,0,5]$. Therefore, the coefficient of the inner distribution corresponding to the Lee-composition $[1,1,1,0]$ is $6$. If we multiply these vectors by $2$, we get the codewords $[0,4,2]$, $[0,3,5]$, $[2,0,4]$, $[4,5,0]$, $[3,2,0]$ and $[5,0,3]$, i.e., all such codewords, which have the Lee-composition $[1,0,1,1]$. Similarly, by multiplying with other possible values we obtain sets of codewords corresponding to certain Lee-compositions.

In Table \ref{tbl-compositions}, the Lee-compositions $\mathbf{t}_{i}$ of the code are listed together with the coefficients of the inner distribution $B_{\mathbf{t}_{i}}$. For each coefficient $B_{\mathbf{t}_{i}}$, the table shows also the set $\tau(\mathbf{t}_{i})$ of those coefficients of the inner distribution, which are equal to $B_{\mathbf{t}_{i}}$ under the transformations following from the linearity of the code, i.e., the coefficients of the inner distribution corresponding to such Lee-compositions, which are obtained from the Lee-compositions $\mathbf{t}_{i}$ by the mapping $\tau(\mathbf{t}_{i})$. The Lee-compositions are indexed based on the lexicographic order of all Lee-compositions for $\mathbb{F}_{7}^{3}$.

\begin{table*}[!h]\footnotesize
\caption{The Lee-compositions $\mathbf{t}_{i}$, the coefficients $B_{\mathbf{t}_{i}}$ and the set  $\tau(\mathbf{t}_{i})$ of the $[3,2]$-code over $\mathbb{F}_{7}$. The Lee-compositions are indexed based on the lexicographic order of all Lee-compositions for $\mathbb{F}_{7}^{3}$.}
\label{tbl-compositions}
 \begin{center}
\begin{tabular} {|c|c|c|c|}\hline
i & $\mathbf{t}_{i}$ & $B_{\mathbf{t}_{i}}$ & $\tau(\mathbf{t}_{i})$\\ \hline
0 & (3,0,0,0) & 1 & $\{B_{\mathbf{t}_{0}}\}$\\
5 & (1,1,1,0) & 6 & $\{B_{\mathbf{t}_{5}},B_{\mathbf{t}_{6}},B_{\mathbf{t}_{8}}\}$\\
6 & (1,1,0,1) & 6 & $\{B_{\mathbf{t}_{5}},B_{\mathbf{t}_{6}},B_{\mathbf{t}_{8}}\}$\\
8 & (1,0,1,1) & 6 & $\{B_{\mathbf{t}_{5}},B_{\mathbf{t}_{6}},B_{\mathbf{t}_{8}}\}$\\
10 & (0,3,0,0) & 2 & $\{B_{\mathbf{t}_{10}},B_{\mathbf{t}_{16}},B_{\mathbf{t}_{19}}\}$ \\
11 & (0,2,1,0) & 6 & $\{B_{\mathbf{t}_{11}},B_{\mathbf{t}_{15}},B_{\mathbf{t}_{17}}\}$\\
14 & (0,1,1,1) & 6 & $\{B_{\mathbf{t}_{14}}\}$\\
15 & (0,1,0,2) & 6 & $\{B_{\mathbf{t}_{11}},B_{\mathbf{t}_{15}},B_{\mathbf{t}_{17}}\}$\\
16 & (0,0,3,0) & 2 & $\{B_{\mathbf{t}_{10}},B_{\mathbf{t}_{16}},B_{\mathbf{t}_{19}}\}$\\
17 & (0,0,2,1) & 6 & $\{B_{\mathbf{t}_{11}},B_{\mathbf{t}_{15}},B_{\mathbf{t}_{17}}\}$\\
19 & (0,0,0,3) & 2 & $\{B_{\mathbf{t}_{10}},B_{\mathbf{t}_{16}},B_{\mathbf{t}_{19}}\}$\\ \hline
\end{tabular}
\end{center}
\end{table*}

\end{example}

\subsection{Dual Codes}
\label{ubsec-dual}

The MacWilliams identities state that the coefficients of the inner distribution of the dual of a code are given by a transformations of the coefficients of the inner distribution of the original code (for further reading, see, for instance \cite{[MacWilliams-Sloane]}). If we consider the above refinements with respect to the dual code, we may formulate more equality constraints, which follow from the connections between certain Lee-compositions in the dual code. Now \cite{[MacWilliams-Sloane]},
\begin{eqnarray}\label{eq-maw}\beta_{\mathbf{k}} = \frac{1}{|C|}\sum_{i=0}^{\alpha}L_{\mathbf{k}}(\mathbf{t}_{i})B_{\mathbf{t}_{i}},\end{eqnarray}
where $\beta_{\mathbf{k}}$ is a coefficient of the inner distribution of the dual code. Since the dual code is linear, some of these coefficients must be equal to each other, i.e., for some $u\neq v$, $\beta_{\mathbf{k}_{u}}=\beta_{\mathbf{k}_{v}}$. Hence, we may write
\begin{eqnarray*}\beta_{\mathbf{k}_{u}} = \frac{1}{|C|}\sum_{i=0}^{\alpha}L_{\mathbf{k}_{u}}(\mathbf{t}_{i})B_{\mathbf{t}_{i}} =  \frac{1}{|C|}\sum_{i=0}^{\alpha}L_{\mathbf{k}_{v}}(\mathbf{t}_{i})B_{\mathbf{t}_{i}} =\beta_{\mathbf{k}_{v}}.\end{eqnarray*}
We get the equality constraints
\begin{eqnarray*}\sum_{i=0}^{\alpha}L_{\mathbf{k}_{u}}(\mathbf{t}_{i})B_{\mathbf{t}_{i}} - \sum_{i=0}^{\alpha}L_{\mathbf{k}_{v}}(\mathbf{t}_{i})B_{\mathbf{t}_{i}} =
\sum_{i=0}^{\alpha}(L_{\mathbf{k}_{u}}(\mathbf{t}_{i})-L_{\mathbf{k}_{v}}(\mathbf{t}_{i}))B_{\mathbf{t}_{i}} =0.\end{eqnarray*}

Now, we may formulate the linear programming problem for linear Lee-codes with the equality constraints given by the dual code:
\begin{theorem}\label{the-linprog-dualref}
Let $B_{\mathbf{t}}^{*}, \ \mathbf{t}\in \left\{\mathbf{t}_{1}, \ldots,\mathbf{t}_{\alpha}\right\}$ be an optimal
solution of the linear programming problem
\begin{eqnarray}
\label{eq-primal_lee}
&&\sum_{i=1}^{\alpha}B_{\mathbf{t}_{i}} =\max!\nonumber\\
&&B_{\mathbf{t}_{i}}\geq 0, \ i\in I \hbox{ and }B_{\mathbf{t}_{i}}= 0, \ i\in \{1,\ldots,\alpha\}\setminus I\nonumber\\
&&\sum_{i=1}^{\alpha} B_{\mathbf{t}_{i}}L_{\mathbf{k}}(\mathbf{t}_{i})\geq - \left[\begin{matrix}n\\ \mathbf{k}\end{matrix}\right],\\
&&\small \hbox{for all }\mathbf{k} \hbox{ running through Lee-compositions},\nonumber \normalfont\\
&&\sum_{i=0}^{\alpha} B_{\mathbf{t}_{i}}(L_{\mathbf{k}_{u}}(\mathbf{t}_{i})-L_{\mathbf{k}_{v}}(\mathbf{t}_{i}))= 0 \hbox{ for all } \mathbf{k}_{v}\in \tau(\mathbf{k}_{u}),\nonumber
\end{eqnarray}
where $I = \{i\mid l(\mathbf{x})=\mathbf{t}_{i} \hbox{ and } w_{L}(\mathbf{x})\geq d\}$. Then $1+\sum_{i=1}^{\alpha} B_{\mathbf{t}_{i}}^{*}$ is an upper bound to the size of the code $C$ with the minimum distance $d$.
\end{theorem}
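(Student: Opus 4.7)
The plan is to reduce the claim to Theorem \ref{the-linprog} (which already proves the upper bound when only the inequality constraints are imposed) by showing that the additional equality constraints coming from the dual code are satisfied by the actual inner distribution of any linear code $C \subseteq \mathbb{F}_q^n$ with minimum Lee-distance $d$. Once this is established, the true inner distribution $(B_{\mathbf{t}_1},\ldots,B_{\mathbf{t}_\alpha})$ of $C$ is a feasible point of the refined LP, so the optimal value satisfies $\sum_i B^*_{\mathbf{t}_i} \geq \sum_i B_{\mathbf{t}_i} = |C|-1$, which rearranges to the claimed bound $|C| \leq 1 + \sum_i B^*_{\mathbf{t}_i}$.

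First I would recall that the inequality constraints and the support constraint $B_{\mathbf{t}_i}=0$ for $i\notin I$ are already justified in Theorem \ref{the-linprog}, so only the new equalities
\begin{eqnarray*}
\sum_{i=0}^{\alpha} B_{\mathbf{t}_i}\bigl(L_{\mathbf{k}_u}(\mathbf{t}_i)-L_{\mathbf{k}_v}(\mathbf{t}_i)\bigr) = 0, \qquad \mathbf{k}_v \in \tau(\mathbf{k}_u),
\end{eqnarray*}
require verification. For this, I would invoke the MacWilliams-type identity (\ref{eq-maw}), which expresses the inner-distribution coefficients $\beta_{\mathbf{k}}$ of the dual code $C^{\perp}$ as the linear transformation $\beta_{\mathbf{k}} = \frac{1}{|C|}\sum_i L_{\mathbf{k}}(\mathbf{t}_i)B_{\mathbf{t}_i}$. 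Since $C^{\perp}$ is itself a linear code in $\mathbb{F}_q^n$, the previous Lemma applies to it: for every pair $\mathbf{k}_u, \mathbf{k}_v$ with $\mathbf{k}_v\in\tau(\mathbf{k}_u)$, the sets $C^{\perp}_{\mathbf{k}_u}$ and $C^{\perp}_{\mathbf{k}_v}$ have equal cardinalities, hence $\beta_{\mathbf{k}_u}=\beta_{\mathbf{k}_v}$.

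Substituting this equality into (\ref{eq-maw}) and clearing $|C|$ yields exactly the new constraint, which finishes the verification that the true $(B_{\mathbf{t}_i})$ lies in the refined feasible region. The argument then concludes as in Theorem \ref{the-linprog}: since the feasible region of the refined LP is a (possibly strict) subset of that in Theorem \ref{the-linprog} but still contains the actual inner distribution of $C$, its optimum is at least $|C|-1$, yielding the upper bound.

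The main obstacle I anticipate is a purely bookkeeping one: ensuring that the mapping $\tau$, originally defined on Lee-compositions appearing in the primal code, applies uniformly to Lee-compositions that index the dual code's distribution in equation (\ref{eq-maw}), and that the index ranges ($i$ starting at $0$ versus $1$, and $\mathbf{k}$ running over all Lee-compositions) match between the Lemma, the MacWilliams transform, and the LP formulation. Once these indexing conventions are aligned, no further computation is needed, since the $L_{\mathbf{k}}(\mathbf{t}_i)$ act as the second eigenmatrix entries of the Lee-scheme and the linearity of $C^\perp$ delivers the equalities directly.
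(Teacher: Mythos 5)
Your proposal is correct and follows essentially the same route as the paper: the authors justify Theorem \ref{the-linprog-dualref} in the text immediately preceding it by applying the MacWilliams-type identity (\ref{eq-maw}), invoking the linearity of the dual code so that the Lemma forces $\beta_{\mathbf{k}_u}=\beta_{\mathbf{k}_v}$ for $\mathbf{k}_v\in\tau(\mathbf{k}_u)$, and subtracting to obtain the new equality constraints, after which the bound follows from Theorem \ref{the-linprog} exactly as you describe. Your explicit remark that the true inner distribution remains feasible for the refined LP (so the optimum still dominates $|C|-1$) is left implicit in the paper but is the same argument.
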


\begin{theorem} The problems in theorems \ref{the-linprog-linref} and \ref{the-linprog-dualref} are equivalent. \end{theorem}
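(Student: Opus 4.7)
The plan is to show that the feasible regions of the two linear programs coincide. Since the objective, the nonnegativity and zero constraints, and the inequality constraints $\sum_i B_{\mathbf{t}_i} L_{\mathbf{k}}(\mathbf{t}_i) \geq -\left[\begin{matrix}n\\ \mathbf{k}\end{matrix}\right]$ appear in both problems, it suffices to prove that the primal equality constraints $B_{\mathbf{t}_i}=B_{\mathbf{t}_j}$ (for $\mathbf{t}_j\in\tau(\mathbf{t}_i)$) and the dual equality constraints $\sum_i B_{\mathbf{t}_i}(L_{\mathbf{k}_u}(\mathbf{t}_i)-L_{\mathbf{k}_v}(\mathbf{t}_i))=0$ (for $\mathbf{k}_v\in\tau(\mathbf{k}_u)$) cut out the same affine subspace of $\mathbb{R}^{\alpha}$.

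The key ingredient is a symmetry of the Lee-numbers themselves: for every $r\in\{1,\ldots,q-1\}$, write $\pi_r(\mathbf{t}):=[t_{\pi_r(0)},t_{\pi_r(1)},\ldots,t_{\pi_r(s)}]$ so that $\tau(\mathbf{t})=\{\pi_r(\mathbf{t})\mid r\in\{1,\ldots,q-1\}\}$. Then
\begin{equation}\label{eq-symmetry} L_{\pi_r(\mathbf{k})}(\pi_r(\mathbf{t}))=L_{\mathbf{k}}(\mathbf{t}) \end{equation}
for every pair of Lee-compositions $\mathbf{k},\mathbf{t}$. To derive (\ref{eq-symmetry}) from the generating function (\ref{eq-leenumb2s1}), substitute $z_k\mapsto z_{\pi_r(k)}$ and $u_l\mapsto u_{\pi_r(l)}$ simultaneously. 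The $l$-th factor $z_0+\sum_k(\xi^{kl}+\xi^{-kl})z_k$ depends on $(k,l)$ only through $\pm kl\bmod q$, so after $z_k\mapsto z_{\pi_r(k)}$ it becomes the factor indexed by $rl$; replacing $u_l$ by $u_{\pi_r(l)}$ and relabelling the product index then restores the original expression. Matching coefficients of $\prod_k z_k^{t_k}$ on the two sides yields (\ref{eq-symmetry}).

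For the direction primal $\Rightarrow$ dual, assume $B_{\pi_r(\mathbf{t})}=B_{\mathbf{t}}$ for every $\mathbf{t}$ and every $r$. Using (\ref{eq-symmetry}) in the form $L_{\pi_r(\mathbf{k})}(\mathbf{t}_i)=L_{\mathbf{k}}(\pi_r^{-1}(\mathbf{t}_i))$ and reindexing by $\mathbf{t}_j=\pi_r^{-1}(\mathbf{t}_i)$,
\[\sum_i B_{\mathbf{t}_i}L_{\pi_r(\mathbf{k})}(\mathbf{t}_i)=\sum_j B_{\pi_r(\mathbf{t}_j)}L_{\mathbf{k}}(\mathbf{t}_j)=\sum_j B_{\mathbf{t}_j}L_{\mathbf{k}}(\mathbf{t}_j),\]
which is exactly the dual equality constraint with $\mathbf{k}_v=\pi_r(\mathbf{k}_u)$. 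For the reverse direction, the same chain of manipulations gives
\[\sum_i (L_{\pi_r(\mathbf{k})}(\mathbf{t}_i)-L_{\mathbf{k}}(\mathbf{t}_i))B_{\mathbf{t}_i}=\sum_j L_{\mathbf{k}}(\mathbf{t}_j)\bigl(B_{\pi_r(\mathbf{t}_j)}-B_{\mathbf{t}_j}\bigr),\]
so the dual equalities assert that the matrix $P=(L_{\mathbf{k}}(\mathbf{t}))_{\mathbf{k},\mathbf{t}}$ annihilates the vector whose $\mathbf{t}$-entry is $B_{\pi_r(\mathbf{t})}-B_{\mathbf{t}}$. Because $P$ is the first eigenmatrix of the Lee-scheme and is therefore invertible, this forces $B_{\pi_r(\mathbf{t})}=B_{\mathbf{t}}$ for all $\mathbf{t}$ and all $r$, which is the primal equality system.

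The main obstacle is the careful verification of (\ref{eq-symmetry}): the two simultaneous permutations of indices, combined with the folding implicit in the definition of $\pi_r$, require some bookkeeping to align the $rl$ index of the inner factor with the $\pi_r$ permutation of $\mathbf{u}$. Once (\ref{eq-symmetry}) is in hand, both implications reduce to routine reindexings, with invertibility of the eigenmatrix $P$ supplying the additional input needed for the dual-to-primal direction.
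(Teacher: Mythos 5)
Your proof is correct in substance but takes a genuinely different route from the paper's. The paper substitutes $\mathbf{B}=A\boldsymbol{\gamma}$ and shows that $\frac{1}{|C|}\Upsilon A\boldsymbol{\gamma}$ again has the block-constant form $A\boldsymbol{\gamma}'$, by constructing a basis of $\boldsymbol{\gamma}$-vectors from the inner distributions of one-generator linear codes (for which the MacWilliams identity guarantees the claim) and extending by linearity; the equivalence of the two equality systems is then read off from this invariance. You instead isolate a symmetry of the Lee-numbers under the composition-permutations $\pi_r$, prove it directly from the generating function, and reduce both implications to a reindexing over the orbits $\tau(\cdot)$, with invertibility of the eigenmatrix $(L_{\mathbf{k}}(\mathbf{t}))$ (a standard fact, since $PQ=|X|I$ for the eigenmatrices of an association scheme) supplying the dual-to-primal direction. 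Your argument is more self-contained -- it does not need the auxiliary codes, their linear independence, or the varying $\frac{1}{|C|}$ factors -- and it exposes the algebraic identity that actually drives the equivalence; the same identity also underlies the paper's integer-valuedness argument for $\mathcal{U}=\Upsilon A$ in Section 4.2.

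One bookkeeping slip: your symmetry identity is stated with the wrong twist. From the generating function, or from $L_{\mathbf{t}}(\mathbf{u})=\sum_{\mathbf{x}:\,l(\mathbf{x})=\mathbf{t}}\xi^{\mathbf{v}\cdot\mathbf{x}}$ with $l(\mathbf{v})=\mathbf{u}$, one obtains $L_{\pi_r(\mathbf{k})}(\mathbf{t})=L_{\mathbf{k}}(\pi_r(\mathbf{t}))$, equivalently $L_{\pi_r(\mathbf{k})}(\pi_{r^{-1}}(\mathbf{t}))=L_{\mathbf{k}}(\mathbf{t})$, whereas you assert $L_{\pi_r(\mathbf{k})}(\pi_r(\mathbf{t}))=L_{\mathbf{k}}(\mathbf{t})$. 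The latter equals $L_{\mathbf{k}}(\pi_{r^2}(\mathbf{t}))$ and is false in general (it holds only when $r^2\equiv\pm 1$, e.g.\ it fails for $q=7$, $r=2$). The error is harmless for your argument: both directions only require that the composition paired with $\mathbf{t}_i$ after reindexing lies in $\tau(\mathbf{t}_i)$ and that $r$ ranges over all nonzero residues, so replacing $\pi_r^{-1}$ by $\pi_r$ in the appropriate places repairs the proof verbatim. You should, however, restate the lemma correctly before relying on it.
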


\begin{proof} Rearrange now the indices of Lee-compositions so that in the sequence $\mathbf{t}_{0},\ldots,\mathbf{t}_{\alpha}$ the compositions $\tau(\mathbf{t})$ for which the coefficients $B_{\mathbf{t}}$ are constrained to have the same cardinalities will be in a consecutive order. Thus the valid solutions by the equality constraints given in Theorem \ref{the-linprog-linref} will be of the form
\begin{eqnarray*}
\left[\begin{matrix} B_{\mathbf{t}_{0}} \\ B_{\mathbf{t}_{1}} \\ \vdots \\B_{\mathbf{t}_{\alpha}}\end{matrix}\right] &=& \left[\begin{matrix} 1 & 0 & 0 & 0 &\cdots & 0 \\ 0& 1 & 0 & 0 &\cdots & 0  \\ \vdots & \vdots & \vdots & \vdots &\cdots & \vdots \\ 0 & 1 & 0 & 0 &\cdots & 0 \\ 0 & 0 & 1 & 0 &\cdots & 0  \\ \vdots & \vdots & \vdots & \vdots &\cdots & \vdots \\  &  &  &  & \ddots& \\ 0 & 0 & 0 & 0 &\cdots & 1 \end{matrix}\right]\left[\begin{matrix} \gamma_{0} \\ \gamma_{1} \\ \vdots \\ \gamma_{\kappa}\end{matrix}\right]\\ \mathbf{B}&=&A\mathbf{\gamma},\end{eqnarray*}
where the matrix $A$ has $\kappa+1$ blocks, each having as many rows as there are elements in the corresponding set $\tau(\mathbf{t}_{i})$.

By (\ref{eq-maw}), for any linear code
$$\boldsymbol{\beta}=\frac{1}{|C|}\Upsilon\mathbf{B},$$
where $\mathbf{B}$ and $\boldsymbol{\beta}$ are the inner distributions of the code and its dual code and $\Upsilon$ is the matrix containing the Lee-numbers with $\Upsilon(\mathbf{t},\mathbf{u})=L_{\mathbf{t}}(\mathbf{u})$.

Since $\mathbf{B}=A\boldsymbol{\gamma}$ and $\boldsymbol{\beta}=A\boldsymbol{\gamma}'$, we may write
\begin{eqnarray}\label{eq-upsilon}A\boldsymbol{\gamma}'=\frac{1}{|C|}\Upsilon A\boldsymbol{\gamma}.\end{eqnarray}

Now, we want to show that the above equation holds for any arbitrary $\boldsymbol{\gamma}$ in order to show the two problems equivalent. In other words, we want to show that the transformation of any vector of the form $A\boldsymbol{\gamma}$ by the Lee-numbers $\Upsilon$ is a vector of the form $A\boldsymbol{\gamma}'$.

Construct a linear code in $\mathbb{F}_{q}^{n}$ by taking a generator matrix having just one vector with a Lee-composition $\mathbf{t}_{i}$. The code has an inner distribution with nonzero values at position $B_{\mathbf{t}_{0}}$ and positions $B_{\mathbf{t}_{u}}$, $\mathbf{t}_{u}\in\tau(\mathbf{t}_{i})$. Hence, we obtain a vector $\boldsymbol{\gamma}_{i}$ with two nonzero values (with one nonzero value for the code having just the all-zero vector). Continue by taking another vector having a Lee-composition $\mathbf{t}_{j}\not\in\tau(\mathbf{t}_{i})$ as a generator matrix. Continue in such a way, for each Lee-composition not included in the previous sets $\tau$, so that we obtain $\kappa+1$ linearly independent vectors $\boldsymbol{\gamma}_{0},\ldots,\boldsymbol{\gamma}_{\kappa}$.

 Any arbitrary vector $\boldsymbol{\gamma}$ can now be obtained as a linear combination of the linearly independent vectors $\boldsymbol{\gamma}_{0},\ldots,\boldsymbol{\gamma}_{\kappa}$, i.e., we may take $\boldsymbol{\gamma}=a_{0}\boldsymbol{\gamma}_{0}+\cdots+ a_{\kappa}\boldsymbol{\gamma}_{\kappa}$. Since, the vectors $\boldsymbol{\gamma}_{0},\ldots,\boldsymbol{\gamma}_{\kappa}$ are those of linear codes, the equality in (\ref{eq-upsilon}) holds for them, i.e., we have for each $\boldsymbol{\gamma}_{i}$,
\begin{eqnarray}\label{eq-gamma}A\boldsymbol{\gamma}_{i}'=\frac{1}{|C|}\Upsilon A\boldsymbol{\gamma}_{i}.\end{eqnarray}

We may then write
\begin{eqnarray*}a_{0}\cdot A \boldsymbol{\gamma}_{0}'+\cdots+a_{\kappa}\cdot A \boldsymbol{\gamma}_{\kappa}'&=&a_{0}\cdot\frac{1}{|C|}\Upsilon  A \boldsymbol{\gamma}_{0}+\cdots+a_{\kappa}\cdot\frac{1}{|C|}\Upsilon  A\boldsymbol{\gamma}_{\kappa}\\
A\cdot( a_{0}\boldsymbol{\gamma}_{0}'+\cdots+ a_{\kappa}\boldsymbol{\gamma}_{\kappa}')&=&\frac{1}{|C|}\Upsilon A\cdot( a_{0}\boldsymbol{\gamma}_{0}+\cdots+ a_{\kappa}\boldsymbol{\gamma}_{\kappa})\\
A\boldsymbol{\gamma}'&=&\frac{1}{|C|}\Upsilon A\boldsymbol{\gamma}.\end{eqnarray*}

It remains to show that given any arbitrary $A\boldsymbol{\gamma}'$, the equation (\ref{eq-upsilon}) holds. We may construct $\boldsymbol{\gamma}'$ similarly as a linear combination of linearly independent vectors $\boldsymbol{\gamma}'_{0},\ldots,\boldsymbol{\gamma}'_{\kappa}$ constructed as the vectors $\boldsymbol{\gamma}_{0},\ldots,\boldsymbol{\gamma}_{\kappa}$ above. Hence, we may take $\boldsymbol{\gamma}'=b_{0}\boldsymbol{\gamma}'_{0}+\cdots+ b_{\kappa}\boldsymbol{\gamma}'_{\kappa}$. Again, for each $\boldsymbol{\gamma}_{i}'$ we have (\ref{eq-gamma}), and we may thus conclude using the above reasoning that the the equation (\ref{eq-upsilon}) holds.

\end{proof}

\section{Computational Aspects of the Linear Programming Problem for Lee-codes}
\label{sec-compu}

In this section, we discuss the computational aspects of the linear programming problem. Accurate results in the linear programming problem depend on efficient and accurate computation. We introduce a recursive way for computing the Lee-numbers, which play an important part in the computations. Based on the refinements on the problem for linear codes, we introduce a more compact linear program, where the set of linear constraints is reduced based on the theory. We can perform all computations with integers, resulting in very accurate results.

\subsection{Computing the Lee-numbers}

In \cite{[astola_tabus_bounds]}, a recursion for computing the Lee-numbers was introduced, providing the possibility of efficient computation of the bounds. We introduce here an alternative recursion following from the polynomial definition of the Lee-numbers. The recursion is based on the observation that as the length of the vector grows by one, it results in addition of $1$ in some component of the Lee-composition depending on the added component.

The equations (\ref{eq-leenumb2s1}) and (\ref{eq-leenumb2s}) give the Lee-numbers as coefficients of a generating polynomial. Let us now examine, how we can obtain them recursively using this generating polynomial by an example for $q=5$. Denote $\xi = \exp(\frac{2\pi \sqrt{-1}}{5})$. The Lee-numbers $L_{\mathbf{t}}(\mathbf{u})$ are now given by
\begin{eqnarray}
\label{eq-leenumbq5}
&&(z_{0}+2z_{1} +2z_{2})^{u_{0}}(z_{0}+(\xi+\xi^{-1})z_{1} +
(\xi^{2}+\xi^{-2})z_{2})^{u_{1}}\cdot\nonumber\\&&(z_{0}+(\xi^{2}+\xi^{-2})z_{1} +
(\xi+\xi^{-1})z_{2})^{u_{2}}=
\sum_{\mathbf{t}}L_{\mathbf{t}} (\mathbf{u})z_{0}^{t_{0}}z_{1}^{t_{1}}z_{2}^{t_{2}}.
\end{eqnarray}

Notice that if we denote $\zeta = \xi+\xi^{-1}=\xi+\xi^4$, then $\zeta^2=\xi^2+\xi^{-2}+2\xi\xi^{-1}=\xi^{2}+\xi^{3}+2$. Because $1+\xi+\xi^{2}+\xi^{3}+\xi^{4}=0$ we have $\zeta^{2}+\zeta-1=0$. Therefore, $\xi^2+\xi^{-2}=-1-\zeta$, 
and we can write (\ref{eq-leenumbq5}) as
\begin{eqnarray}
\label{eq-leenumbq5_2}
&&(z_{0}+2z_{1} +2z_{2})^{u_{0}}(z_{0}+\zeta z_{1} +
(-1-\zeta)z_{2})^{u_{1}}\cdot\nonumber\\&&(z_{0}+(-1-\zeta)z_{1} +
\zeta z_{2})^{u_{2}}=
\sum_{\mathbf{t}}L_{\mathbf{t}} (\mathbf{u})z_{0}^{t_{0}}z_{1}^{t_{1}}z_{2}^{t_{2}}.
\end{eqnarray}

Assume that we have the Lee-numbers $L_{\mathbf{t}}(u_{0},u_{1},u_{2})$. Then for $L_{\mathbf{t}}(u_{0}+1,u_{1},u_{2})$ we have
\begin{eqnarray*}
&&\sum_{\mathbf{t}}L_{\mathbf{t}} (u_{0}+1,u_{1},u_{2})z_{0}^{t_{0}}z_{1}^{t_{1}}z_{2}^{t_{2}}
\\&&=(z_{0}+2z_{1}+2z_{2})\sum_{\mathbf{t}}L_{\mathbf{t}} (u_{0},u_{1},u_{2})z_{0}^{t_{0}}z_{1}^{t_{1}}z_{2}^{t_{2}}\\
&&=L_{(t_{0}-1,t_{1},t_{2})}(u_{0},u_{1},u_{2})z_{0}^{t_{0}}z_{1}^{t_{1}}z_{2}^{t_{2}}+ 2L_{(t_{0},t_{1}-1,t_{2})}(u_{0},u_{1},u_{2})z_{0}^{t_{0}}z_{1}^{t_{1}}z_{2}^{t_{2}}+\\
&&2L_{(t_{0},t_{1},t_{2}-1)}(u_{0},u_{1},u_{2})z_{0}^{t_{0}}z_{1}^{t_{1}}z_{2}^{t_{2}}.\end{eqnarray*}

Therefore,
\begin{eqnarray*}
&&L_{\mathbf{t}} (u_{0}+1,u_{1},u_{2}) = L_{(t_{0}-1,t_{1},t_{2})}(u_{0},u_{1},u_{2})+ \\ &&2L_{(t_{0},t_{1}-1,t_{2})}(u_{0},u_{1},u_{2})+
2L_{(t_{0},t_{1},t_{2}-1)}(u_{0},u_{1},u_{2}).\end{eqnarray*}

Similarly, for $L_{\mathbf{t}}(u_{0},u_{1}+1,u_{2})$ and $L_{\mathbf{t}}(u_{0},u_{1},u_{2}+1)$ we have
\begin{eqnarray*}
&&L_{\mathbf{t}} (u_{0},u_{1}+1,u_{2})=L_{(t_{0}-1,t_{1},t_{2})}(u_{0},u_{1},u_{2})+ \\ &&\zeta L_{(t_{0},t_{1}-1,t_{2})}(u_{0},u_{1},u_{2})+
(-1-\zeta)L_{(t_{0},t_{1},t_{2}-1)}(u_{0},u_{1},u_{2}),\end{eqnarray*}
and
\begin{eqnarray*}
&&L_{\mathbf{t}} (u_{0},u_{1},u_{2}+1)=L_{(t_{0}-1,t_{1},t_{2})}(u_{0},u_{1},u_{2})+ \\ &&(-1-\zeta)L_{(t_{0},t_{1}-1,t_{2})}(u_{0},u_{1},u_{2})+
\zeta L_{(t_{0},t_{1},t_{2}-1)}(u_{0},u_{1},u_{2}).\end{eqnarray*}

Notice, that the above recursions are of the form
$$L_{1}=L_{2}+(a+b\zeta)L_{3}+(c+d\zeta)L_{4},$$
and the possible initial values for $q=5$ with $n=1$ are exactly the coefficients appearing in these recursions, $\{1,2,\zeta,-1-\zeta\}$. Because
$$(a+b\zeta)(c+d\zeta)=ac+(ad+bc)\zeta+bd\zeta^{2}= ac+bd+(ad+bc-bd)\zeta,$$
we see that if we represent the Lee-numbers as vectors $[a,b]$ and define multiplication as $[a,b]\cdot[c,d]=[ac+bd,ad+bc+bd]$, we can perform all calculations with integers.

Consider now the case for $q=7$, where $\xi = \exp(\frac{2\pi \sqrt{-1}}{7})$. Denote again by $\zeta=\xi+\xi^{-1}$. Again, $\zeta^2=\xi^2+\xi^{-2}+2\xi\xi^{-1}=\xi^{2}+\xi^{3}+2$. For $q=7$, $1+\xi+\xi^{2}+\xi^{3}+\xi^{4}+\xi^{5}+\xi^{6}=0$, so we get for $\zeta^{3}=-\zeta^{2}+2\zeta+1$. The Lee-numbers $L_{\mathbf{t}}(\mathbf{u})$ are now given by
\begin{eqnarray*}
&&(z_{0}+2z_{1} +2z_{2}+2z_{3})^{u_{0}}(z_{0}+\zeta z_{1} +
(\zeta^{2}-2)z_{2}+(-\zeta^{2}-\zeta+1)z_{3})^{u_{1}}\cdot\nonumber\\&&(z_{0}+(\zeta^{2}-2)z_{1} +(-\zeta^{2}-\zeta+1)z_{2}+\zeta z_{3}
)^{u_{2}}\cdot\\ &&(z_{0}+(-\zeta^{2}-\zeta+1)z_{1}+\zeta z_{2} +(\zeta^{2}-2)z_{3})^{u_{3}}=
\sum_{\mathbf{t}}L_{\mathbf{t}} (\mathbf{u})z_{0}^{t_{0}}z_{1}^{t_{1}}z_{2}^{t_{2}}.\nonumber
\end{eqnarray*}

Hence, the recursions will be of the form
$$L_{1}=L_{2}+(a+b\zeta+c\zeta^{2})L_{3}+(d+e\zeta+f\zeta^{2})L_{4}+(g+h\zeta+i\zeta^{2})L_{5}.$$
The multiplication of two coefficients in the above equation is
\begin{eqnarray*}&&(a+b\zeta+c\zeta^{2})(d+e\zeta+f\zeta^{2})\\&&=ad+ae\zeta+af\zeta^{2}+bd\zeta+be\zeta^{2}+bf\zeta^{3}+cd\zeta^{2}+ce\zeta^{3}+cf\zeta^{4},\end{eqnarray*}
which, since $\zeta^{4}=3\zeta^{2}-\zeta-1$, results in the multiplication rule
$[a,b,c]\cdot[d,e,f]=[ad+bf+ce-cf,ae+bd+2ce+2bf-cf,af+be+cd-bf-ce+3cf]$ and we may again perform all computations with integers.

For the general $q$, the powers of $\zeta$ will be reduced according to the cyclotomic polynomial $\Phi_{q}(\xi)$, where $\xi$ are the roots of the cyclotomic polynomial, i.e., the primitive roots of unity $\xi = \exp(\frac{2\pi \sqrt{-1}}{q})$. This recursion provides very accurate values for the Lee-numbers, resulting in more accurate optimization in the linear programming problem.

\subsection{Compacting the Set of Linear Constraints}

Most linear programming solvers allow to express the constraints of the problems both in terms of inequality and equality constraints, thus the two formulations of the LP problem given in Theorems \ref{the-linprog-linref} and \ref{the-linprog-dualref} can easily be programmed and run. We examine the structure of the problem so that we can formulate it in a more compact form, leading to a faster execution.

We notice that by replacing the $\alpha$ variables $B_{\mathbf{t}_1},\ldots,B_{\mathbf{t}_\alpha}$ of the LP problem with the set of  variables $\gamma_1,\ldots,\gamma_{\kappa}$ we are eliminating the equality constraints from the LP problem. We introduce the vector $\boldsymbol{\gamma} =(\gamma_0,\gamma_1,\ldots,\gamma_\kappa)$ and formulate the equivalent  LP problem:
\begin{eqnarray}
&&\max_{\gamma_1,\ldots,\gamma_\kappa} \sum _{i=1}^\kappa |\tau_i |\gamma_i  \nonumber\\
 &&\ \ \ \mbox{subject to}\nonumber\\
  &&\ \ \  \gamma_0 =1  \mbox{ and }  \gamma_i \ge 0, i\in I \mbox{ and }  \gamma_i=0,i\in \{1,\ldots,\alpha\}\setminus I\nonumber\\
&&\ \ \ \Upsilon A \gamma \ge 0\nonumber
\end{eqnarray}
The cardinalities $|\tau_i |$ appear in the criterion of the problem since the initial criterion expressed in $\mathbf{B}$ is $\mathbf{1}^T \mathbf{B}$ (where $\mathbf{1}$ is the all one vector), and the criterion in the new variables is $\mathbf{1}^T A \boldsymbol{\gamma}$, where the new vector of coefficients, $\mathbf{1}^T A$, will have as elements the size of the partitions of $A$, which are equal to the cardinalities of the sets $\tau_i$.

  Additionally we notice that the matrix ${\cal U} = \Upsilon A$ can be seen to have the partition structure similar to that of $A$,
 \begin{eqnarray*}
  {\cal U}&=& \Upsilon A = \Upsilon \left[\begin{matrix}1 & 0 & 0 & 0 & \ldots & 0\\\hline
  0 & 1 & 0 & 0 & \ldots & 0\\ 0 & 1 & 0 & 0 & \ldots & 0\\ 0 & 1 & 0 & 0 & \ldots & 0\\ 0 & 1 & 0 & 0 & \ldots & 0 \\\hline
   0 & 0 & 1 & 0 & \ldots & 0  \\ 0 & 0 & 1 & 0 & \ldots & 0  \\\hline\ldots\\\hline  0 & 0 & 0 & 0 & \ldots & 1 \end{matrix}\right]    =
   \left[\begin{matrix}\Phi_{1,1} &\Phi_{1,2} & \Phi_{1,3} & \Phi_{1,4} & \ldots & \Phi_{1,\kappa}\\\hline
 \Phi_{2,1} &\Phi_{2,2} & \Phi_{2,3} & \Phi_{2,4} & \ldots & \Phi_{2,\kappa}\\ \Phi_{2,1} &\Phi_{2,2} & \Phi_{2,3} & \Phi_{2,4} & \ldots & \Phi_{2,\kappa}\\ \Phi_{2,1} &\Phi_{2,2} & \Phi_{2,3} & \Phi_{2,4} & \ldots & \Phi_{2,\kappa}\\ \Phi_{2,1} &\Phi_{2,2} & \Phi_{2,3} & \Phi_{2,4} & \ldots & \Phi_{2,\kappa} \\\hline
   \Phi_{3,1} &\Phi_{3,2} & \Phi_{3,3} & \Phi_{3,4} & \ldots & \Phi_{3,\kappa}  \\   \Phi_{3,1} &\Phi_{3,2} & \Phi_{3,3} & \Phi_{3,4} & \ldots & \Phi_{3,\kappa}    \\\hline\ldots\\\hline  \Phi_{\kappa,1} &\Phi_{\kappa,2} & \Phi_{\kappa,3} & \Phi_{\kappa,4} & \ldots & \Phi_{\kappa,\kappa} \end{matrix}\right]\\
 &=& A \Phi\end{eqnarray*}
   where the matrix $\Phi=[\Phi_{i,j}]$ is $\kappa\times \kappa$ and the rows in a partition corresponds to the Lee compositions belonging to the same set $\tau_i$. Inside a partition the rows of the matrix  ${\cal U}$ are identical, leading to a repeated inequality constraint. In order to remove this redundancy, we are selecting from the matrix ${\cal U}$ only one row per partition block, keeping thus only the non-redundant inequalities, resulting in a matrix ${\cal U}^\prime$, 
   and we replace the inequality constraints $\Upsilon A \boldsymbol{\gamma} \ge 0$ with ${\cal U}^\prime\boldsymbol{\gamma} \ge 0$.

We also observe that the matrix ${\cal U}$ is formed of integer numbers, as opposed to the matrix $\Upsilon$ where the elements are various combinations of the powers of $\xi$, and are in most cases irrational numbers. This can be seen as follows.

The elements of ${\cal U}$ are sums of $k$ Lee-numbers:
 \begin{eqnarray}\label{eq-nums}L_{\mathbf{t_{1}}}(\mathbf{u})+L_{\mathbf{t_{2}}}(\mathbf{u})+ \cdots +L_{\mathbf{t_{k}}}(\mathbf{u})\hspace{4.0cm}\nonumber\\
 = \sum_{\mathbf{x}\mid L(\mathbf{x})=\mathbf{t}_{1}}\left(\prod_{i=1}^{n}\xi^{v_{i}x_{i}}\right) + \sum_{\mathbf{x}\mid L(\mathbf{x})=\mathbf{t}_{2}} \left(\prod_{i=1}^{n}\xi^{v_{i}x_{i}}\right) +\cdots + \sum_{\mathbf{x}\mid L(\mathbf{x})=\mathbf{t}_{k}} \left(\prod_{i=1}^{n}\xi^{v_{i}x_{i}}\right) \end{eqnarray}
 where $k$ is the cardinality of $\tau(\mathbf{t}_{i})$.

Since the Lee-numbers $L_{\mathbf{t}_{i}}(\mathbf{u})$ correspond to compositions according to $\tau$, then for each vector $\mathbf{x}$, in (\ref{eq-nums}) are also included all the vectors $r\mathbf{x}$, where $r\in\{1,\ldots,q-1\}$. Also, each vector can only have one Lee-composition and thus appear in only one of the sums in (\ref{eq-nums}).

 We may now rearrange and group the sum in (\ref{eq-nums}) according to these multiplications:
$$\left(\xi^{\mathbf{v}\cdot\mathbf{x}_{1}}+\xi^{\mathbf{v}\cdot 2\mathbf{x}_{1}}+ \cdots + \xi^{\mathbf{v}\cdot (q-1)\mathbf{x}_{1}}\right) + \left(\xi^{\mathbf{v}\cdot\mathbf{x}_{2}}+\xi^{\mathbf{v}\cdot 2\mathbf{x}_{2}}+ \cdots + \xi^{\mathbf{v}\cdot (q-1)\mathbf{x}_{2}}\right)+\cdots$$
This forms a partition of the set of vectors having a Lee-composition in $\tau(\mathbf{t}_{i})$, since the relation $\mathcal{R}$ defined as $(\mathbf{x},\mathbf{y})\in\mathcal{R}_{x}$ iff $\mathbf{x}=r\mathbf{y}$, $r\in\{1,\ldots,q-1\}$ is clearly an equivalence relation.

Therefore, we can group the sum into $m$ parts, each having $q-1$ terms. If we now look at one such part:
 $$\xi^{\mathbf{v}\cdot\mathbf{x}_{i}}+\xi^{\mathbf{v}\cdot 2\mathbf{x}_{i}}+ \cdots + \xi^{\mathbf{v}\cdot (q-1)\mathbf{x}_{i}},$$
 we see that, since $\mathbf{x}\cdot r\mathbf{y}=r(\mathbf{x}\cdot\mathbf{y})$, we have
 $$\xi^{\mathbf{v}\cdot\mathbf{x}_{i}}+\xi^{2(\mathbf{v}\cdot \mathbf{x}_{i})}+ \cdots + \xi^{(q-1)\mathbf{v}\cdot \mathbf{x}_{i}},$$
 where each exponent in the above sum is distinct. If the dot product $\mathbf{v}\cdot\mathbf{x}_{i}$ is $0$, the above sum is $q-1$. Otherwise, we have a sum of the form
 $$\xi+\xi^{2}+ \cdots + \xi^{q-1}=-1.$$

 So (\ref{eq-nums}) is a sum of the form $m_{1}(q-1)+m_{2}(-1)$, where $m_{1}$ and $m_{2}$ are integers $\geq 0$ such that $m=m_{1}+m_{2}$.

We may now use the more compact version of the linear programming problem performing computations only on integers, obtaining faster execution and very accurate results.

\section{Results}
\label{sec-result}

In Tables \ref{tbl-bounds_lin}-\ref{tbl-bounds-lin7} are the results for the upper bound of the parameter $k$ for linear Lee-codes with $q=5$ and $q=7$, obtained using the linear programming bound for linear codes. The most interesting cases are the situations where the general linear programming bound would allow for a linear code to exist with some parameter $k$ but the refinement shows that such a code cannot exist. For example, with $q=5$, $n=8$ and $d=8$ the linear programming bound is $134$ \cite{[astola_tabus_bounds]}, which does not deny the existence of a linear code with $k=3$. However, the refinement gives a bound of $75$, which shows that there cannot be a code with the parameter $k=3$. Another example would be for $q=7$ when $n=7$ and $d=11$. The linear programming bound gives a bound of 55, but with the refinement for linear codes the value 40 is obtained, implying that a linear code with $k=2$ cannot exist with these parameters.

In Tables \ref{tbl-bounds_lin}-\ref{tbl-bounds-lin7} bounds that were found to be tight are also shown. This was concluded by checking the minimum distances of linear codes generated randomly with given parameters $q$, $n$ and $k$.

\begin{table*}[!h]\footnotesize
\caption{Upper bounds for the parameter $k$ of linear Lee-codes when $q=5$. The $*$ indicates a tight bound and bold an improvement comparing to linear programming bounds in \cite{[astola_tabus_bounds]}.}
\label{tbl-bounds_lin}
 \begin{center}
\begin{tabular} {|c|c|c|c|c|c|c|c|c|c|c|c|c|c|c|c|c|c}\hline
$n\backslash d$ & 3 & 4 & 5& 6& 7 & 8& 9 & 10 & 11 & 12& 13 & 14-15 \\ \hline
2 & $1*$ &  &  & & & & & & & &  &    \\ \hline
3 & $1*$ & $1*$ & &  &  & & & & & &  &    \\ \hline
4 & $2*$ & $2*$ & $1*$ & $1*$ &  &  &  & & &&  &   \\ \hline
5 & $3*$ & $3*$ & $2*$ & $1*$ & $1*$  &  &  &  & & &  &    \\ \hline
6 & $4*$& $3*$ & $3*$ & $2*$ & $\mathbf{1*}$ & $1*$ & $1*$ & &  &  &  &  \\ \hline
7 & $5*$ & $4*$ & $3*$ & $3*$ & $2*$ & $\mathbf{1*}$ & $1*$ & $1*$& & &  &   \\ \hline
8 &  $6*$ & $5*$ & $4*$ & $4*$ & $3*$ & $\mathbf{2*}$ & $2*$ & $1*$ & $1*$&$1*$ &  &  \\ \hline
9 & $7*$ & $6*$ & $5*$ & $5$ & $4$ & $3*$ & $3$ & $2*$ & $\mathbf{1*}$ & $1*$ & $1*$&    \\ \hline
10 & $8*$ &$7*$& $6*$ & $6$ & $5$ & $4$ & $\mathbf{3*}$ & $3$ & $2*$& $2*$ & $1*$ & $1*$ \\ \hline
\end{tabular}
\end{center}
\end{table*}

\begin{table*}[!h]\footnotesize
\caption{Upper bounds for the parameter $k$ of linear Lee-codes when $q=7$. The $*$ indicates a tight bound and bold an improvement comparing to linear programming bounds in \cite{[astola_tabus_bounds]}.}
\label{tbl-bounds-lin7}
 \begin{center}
\begin{tabular} {|c|c|c|c|c|c|c|c|c|c|c|c|c|c|c|c|c|c|c|c|c|c|c|c|c|c|c|c|c|}\hline
$n\backslash d$&  3 & 4 & 5& 6& 7 & 8& 9 & 10 & 11 & 12& 13 & 14& 15 & 16 & 17 & 18 \\ \hline
2 &$1*$ &	&	 &	  & & & & & & & & & & & &\\ \hline
3 &$2*$ & 	$1*$ &	$1*$& 	$1*$& 	&	& & 	& & & && & & & \\ \hline
4 & $2*$ &	$2*$ &	$2*$& 	$1*$& 	$1*$& &	& 	 &	& 	& 	 & & 	 & & &    \\ \hline
5 & $3*$ 	&$3*$ &	$2*$ &	$2*$& 	$1*$& 	$1*$& 	$1*$ &	& &	& &	 &	 & & &   \\ \hline
6 & $4*$ 	&$4*$ &	$3*$ &	$3*$ 	&$2*$ &	$2*$ &	$\mathbf{1*}$& 	$1*$& 	 $1*$& 	 $1*$  & & & & & &\\ \hline
7 & $5*$ 	&$5*$ 	&$4*$ 	&$4$ 	&$3*$ &	$3$ 	&$2*$ 	&$2*$ 	 &$\mathbf{1*}$ 	 &$1*$ 	 &$1*$ & & & & &\\  \hline
8 & $6*$ & $\mathbf{6*}$ & $5*$ & $\mathbf{4*}$ & $4$ & $4$ & $3*$ & $3$ & $2*$ & $2*$ & $\mathbf{1*}$ & $1*$  & $1*$ & & & \\ \hline
9 & $7*$ & $\mathbf{6*}$ & $6$ & $5*$ & $5$ & $4*$ & $4$ & $3*$ & $3$ & $3$ & $2*$ & $\mathbf{1*}$ & $\mathbf{1*}$ & $1*$ & $1*$ & $1*$ \\ \hline
\end{tabular}
\end{center}
\end{table*}

\subsection{Codes meeting the Bounds}

In the following, we give some examples of codes, that meet the bounds obtained for linear codes.

Consider the bound for $k$ given in Table \ref{tbl-bounds_lin} with $q=5$, $n=8$ and $d=8$, which is $2$. This means that the maximum number of codewords in a linear code with these parameters is at most $25$. The code having the generator matrix
\begin{eqnarray*}G_{1}=\left[\begin{matrix} 1&0&0&2&2&3&3&1\\ 0&1&2&3&0&3&4&3\end{matrix}\right]\end{eqnarray*}
is a $[8,2]$-code with the minimum distance $8$, therefore, it is an optimal linear code for the above parameters.

Consider the bound for $k$ given in Table \ref{tbl-bounds_lin} with $q=5$, $n=9$ and $d=5$, which is $5$. This means that the maximum number of codewords in a linear code with these parameters is at most $3125$. The code having the generator matrix
\begin{eqnarray*}G_{2}=\left[\begin{matrix} 1&0&0&0&0&4&2&0&1\\ 0&1&0&0&0&2&4&1&0\\ 0&0&1&0&0&2&2&1&1\\0&0&0&1&0&3&3&3&1\\0&0&0&0&1&1&2&2&2\end{matrix}\right]\end{eqnarray*}
is a $[9,5]$-code with the minimum distance $5$, therefore, it is an optimal linear code for the above parameters.

Consider the bound for $k$ given in Table \ref{tbl-bounds-lin7} with $q=7$, $n=7$ and $d=5$, which is $4$. This means that the maximum number of codewords in a linear code with these parameters is at most $2401$. The code having the generator matrix
\begin{eqnarray*}G_{2}=\left[\begin{matrix} 1&0&0&0&5&4&4\\ 0&1&0&0&3&6&6\\ 0&0&1&0&1&4&6\\0&0&0&1&6&5&3\end{matrix}\right]\end{eqnarray*}
is a $[7,4]$-code with the minimum distance $5$, therefore, it is an optimal linear code for the above parameters.

\section{Conclusions}

In this paper, we introduced refinements on the linear programming bound for linear Lee-codes. These refinements are based on the observation that in the Lee metric, the multiplication of codewords typically changes the Lee-composition of the codeword and so also usually the Lee-weight. Therefore, and since the codes are linear, we are able to generate a mapping between the Lee-compositions, which follows in equalities between the coefficients of the inner distribution of the code, and, thus, in additional equality constraints in the linear programming problem. This refinement results in tighter bounds for linear Lee-codes.

We also discussed the computational aspects of the linear programming problem, including the computation of Lee-numbers, which can be done by recursion, and introduced an effective recursion based on the polynomial representation of the Lee-numbers. We introduced also a more compact form of the linear programming problem based on the refinements on linear Lee-codes. Our method is very accurate, since it allows all computations to be performed with integers.

\bibliographystyle{IEEEtran}
\bibliography{biblio}

\end{document}